\definecolor{orcidlogocol}{HTML}{A6CE39}
\tikzset{
  orcidlogo/.pic={
    \fill[orcidlogocol] svg{M256,128c0,70.7-57.3,128-128,128C57.3,256,0,198.7,0,128C0,57.3,57.3,0,128,0C198.7,0,256,57.3,256,128z};
    \fill[white] svg{M86.3,186.2H70.9V79.1h15.4v48.4V186.2z}
                 svg{M108.9,79.1h41.6c39.6,0,57,28.3,57,53.6c0,27.5-21.5,53.6-56.8,53.6h-41.8V79.1z M124.3,172.4h24.5c34.9,0,42.9-26.5,42.9-39.7c0-21.5-13.7-39.7-43.7-39.7h-23.7V172.4z}
                 svg{M88.7,56.8c0,5.5-4.5,10.1-10.1,10.1c-5.6,0-10.1-4.6-10.1-10.1c0-5.6,4.5-10.1,10.1-10.1C84.2,46.7,88.7,51.3,88.7,56.8z};
  }
}
\newcommand\orcidicon[1]{\href{https://orcid.org/#1}{\mbox{\scalerel*{
\begin{tikzpicture}[yscale=-1,transform shape]
\pic{orcidlogo};
\end{tikzpicture}
}{|}}}}
\def\BibTeX{{\rm B\kern-.05em{\sc i\kern-.025em b}\kern-.08em
    T\kern-.1667em\lower.7ex\hbox{E}\kern-.125emX}}
\newtheorem{definition}{Definition}
\newtheorem{theorem}{Theorem}
\newtheorem{lemma}{Lemma}
\newtheorem{example}{Example}%
\newacronym{ai}{AI}{Artificial Intelligence}
\newacronym{ml}{ML}{Machine Learning}
\newacronym{qml}{QML}{Quantum ML}
\newacronym{xqml}{XQML}{eXplainable Quantum ML}
\newacronym{nserc}{NSERC}{Natural Sciences and Engineering Research Council}
\begin{document}

\title{Quantum Algorithms for Shapley Value Calculation}

\author{\IEEEauthorblockN{Iain Burge\IEEEauthorrefmark{1}~\orcidicon{0009-0008-5360-9430},
Michel Barbeau\IEEEauthorrefmark{1}~\orcidicon{0000-0003-3531-4926},
Joaquin Garcia-Alfaro\IEEEauthorrefmark{2}~\orcidicon{0000-0002-7453-4393}}
\IEEEauthorblockA{\IEEEauthorrefmark{1}School of Computer Science, Carleton University, Ottawa, Canada}
\IEEEauthorblockA{\IEEEauthorrefmark{2}SAMOVAR, T\'el\'ecom SudParis, Institut Polytechnique de Paris, Palaiseau, France}}

\maketitle

\begin{abstract}
In the classical context, the cooperative game theory concept of the Shapley value has been adapted for post hoc explanations of \gls*{ml} models. 
However, this approach does not easily translate to eXplainable Quantum ML (XQML). 
Finding Shapley values can be highly computationally complex. 
We propose quantum algorithms which can extract Shapley values within some confidence interval. 
Our results perform in polynomial time. 
We demonstrate the validity of each approach under specific examples of cooperative voting games.
\end{abstract}

\begin{IEEEkeywords}
Shapley Value, Quantum Computing, Cooperative Game Theory, Explainable Quantum Machine Learning, Machine Learning, Artificial Intelligence, Quantum Machine Learning.
\end{IEEEkeywords}

\section{Introduction}

Research in \gls*{ml} over the past decades deals with black box models. 
Unfortunately, black box models are inherently difficult to interpret. 
Inherently interpretable models would likely be best~\cite{rudin2019stop}, as an explanation of an interpretable model is guaranteed to be correct. 
However, we ideally do not want to discard all of the previous research using black box models.
As a result, there has been a substantial effort in implementing and improving post hoc explanation methods.
Similarly, \gls*{xqml} may benefit from adapting post hoc explanation methods to the quantum realm.

One of the most popular methods of generating post hoc explanations involves calculating Shapley values.
Yet, classical strategies to approximate Shapley values are unwieldy to apply in the context of quantum computers.
Therefore, it is necessary to explore a more native quantum solution to Shapley value approximation.

In this paper, we develop a flexible framework for the global evaluation of input factors in quantum circuits that approximates the Shapley values of such factors. 
Our framework has a one time increased circuit complexity of an additional $\mathcal{O}(an \log an)$ c-not gates, with a total additive increase in circuit depth of $\mathcal{O}(an)$, where $n$ is the number of factors, and $a>0$ is a real number.
The change in space complexity for global evaluations is an additional $\mathcal{O}(\log an)$ qubits over the evaluated circuit.
The circuit of increased complexity must then be repeated $\mathcal{O}(\epsilon^{-1})$ times. 
This procedure can achieve an error of $\mathcal{O}(a^{-1}+\epsilon)$ multiplied by a problem dependent upper bound.
This starkly contrasts the $\mathcal{O}(2^n)$ assessments needed to calculate the Shapley values under the general case directly.
It is also better than the $\mathcal{O}(\sigma^2 \epsilon^{-2})$ complexity given by Monte Carlo approaches, where $\sigma$ is standard deviation \cite{montanaro2015quantum}. 

\medskip

The paper is organized as follows\footnote{Peer-reviewed version~\cite{QCE2023}.}.
Section~\ref{section:relwork} surveys related work.
Section~\ref{section:background} provides background and preliminaries.
Section~\ref{sec:WeightedVotingGame} introduces a guiding example problem.
Sections~\ref{sec:QuantumRepresentation} and~\ref{sec:algorithm} present our methods. 
Section~\ref{sec:example} demonstrates the application of our methods to the problem discussed in Section~\ref{sec:WeightedVotingGame}.
Section~\ref{sec:conclusion} concludes the work.

\section{Related Work}
\label{section:relwork}

Shapley values have been widely used to address multiple engineering problems, including regression, statistical analysis, and machine learning~\cite{lipovetsky2023quantum}.
Finding Shapley values presents a difficult computational combinatorial problem. 
Our work proposes a novel quantum algorithm that reduces this combinatorial problem to an estimation problem which can leverage the power of quantum computation.
Our approach performs in polynomial time. 
We apply our method to weighted voting games~\cite{matsui2001np}.

The deterministic computation of Shapley values in the context of weighted voting games is as difficult as NP-Hard \cite{matsui2001np, prasad1990np}. 
Since voting games are some of the simplest cooperative games, this result does not bode well for more complex scenarios. In the context of Shapley values for machine learning, it has also been shown that the calculation of Shapley values are not tractable for even regression models \cite{van2022tractability}. It was also proven that on the empirical distribution, finding a Shapley value takes exponential time \cite{bertossi2020causality}.

The literature has also addressed the use of Shapley values on \gls*{qml}. Indeed, \gls*{xqml} aims at adding explainability behind model predictions, e.g., in addition to providing classification~\cite{mercaldo2022towards}. \gls*{xqml} can be considered as an alternative research direction of \gls*{qml} instead of trying to justify quantum advantage~\cite{Schuld_2022}. The eventual goal of \gls*{xqml} is to provide, in addition to predictions, humanly understandable interpretations of the predictive models, e.g., for malware detection and classification, under a cybersecurity context~\cite{steinmuller2022explainable}.

Recent work by Heese et al. extends the notion of feature importance for model
predictions in classical \gls*{ml}, to the \gls*{qml} realm~\cite{heese2023explaining}. In contrast to classical ML methods, in which Shapley values are applied to evaluate the importance of each feature for model predictions, Heese et al. apply Shapley values to evaluate the relevance of each  quantum gate associated with a given parametrized quantum circuit. In their work, Heese et al. compute Shapley values involving stochastic processing. In general, the algorithms presented in this paper improve stochastic computation of Shapley values. 
Moreover, our work justifies the interest in handling post hoc explanation frameworks such as the one of Heese et al., in a quantum extended manner.

\section{Shapley Values}
\label{section:background}

Cooperative game theory is the study of coalitional games. 
In this article, we are most interested in Shapley values. 
We now list some definitions and preliminaries,

\begin{definition}[Coalitional game]
    A \emph{coalitional game} is described as the pair $G=(F, V)$. 
    $F = \{0, 1, ..., n\}$ is a set of $n+1$ players.
    $V: \mathcal{P} (F) \xrightarrow{} \mathbb{R}$ is a value function with $V(S) \in \mathbb{R}$ representing the value of a given coalition $S \subseteq F$, with the restriction that $V(\emptyset) = 0$.
\end{definition}
\begin{definition}[Payoff vector]
    Given a game $G=(F, V)$, there exists a \emph{payoff vector} $\Phi(G)$ of length $n+1$. 
    Each element $\Phi(G)_i\in \mathbb{R}$ represents the utility of player $i\in F$. 
    A payoff vector is determined by the value function.
    Player $i$'s payoff value $\Phi(G)_i$ is determined by how $V(S)$, $S\subseteq F$, is affected by $i$'s inclusion or exclusion from $S$.
\end{definition}

There are a variety of solution concepts for constructing payoffs~\cite{aumann2010some}.
We focus on the Shapley solution concept, which returns a payoff vector~\cite{winter2002shapley}.
Each element of the payoff vector $\Phi(G)_i$, is called player $i$'s Shapley value.
Shapley values have multiple different interpretations depending on the game being analyzed.

Shapley values are derived using one of several sets of axioms. 
We use the following four \cite{winter2002shapley}.
Suppose we have games $G=(F, V)$ and $G'=(F, V')$, and a payoff vector $\Phi(G)$, then:
\begin{enumerate}
    \item Efficiency: The sum of all utility is equal to the utility of the grand coalition (the coalition containing all players),
        \begin{equation*}
            \sum\limits_{i=1}^{n} \Phi(G)_i = V(F)
        \end{equation*}
        
    \item Equal Treatment: Players $i$, $j$ are said to be symmetrical if for all $S\subseteq F$, where $i,j\notin S$ we have that $V(S\cup \{i\}) = V(S \cup \{j\})$.
    If $i$, and $j$ are symmetric in $G$, then they are treated equally, $\Phi(G)_i = \Phi(G)_j$.
    
    \item Null Player: Consider a player $i \in F$, if for all $S\subseteq F$ such that $i\notin S$, we have $V(S) = V(S\cup \{i\})$, then $i$ is a null player.
    If $i$ is a null player then $\Phi(G)_i = 0$.
    
    \item Additivity: If a player is in two games $G$ and $G'$, then the Shapley values of the two games is additive
        \begin{align*}
            \Phi(G+G')_i = \Phi(G)_i + \Phi(G')_i
        \end{align*}
    where a game $G+G'$ is defined as $(F, V+V')$, and $(V+V')(S) = V(S) + V'(S)$, $S \subseteq F$.
\end{enumerate}
These axioms lead to a single unique and  intuitive division of utility~\cite{winter2002shapley}.
The values of the payoff vectors can be interpreted as the responsibility of the respective players for the final outcome~\cite{hart1989shapley}.
When player $i$ has a small payoff $\Phi(G)_i$, then player $i$ has a neutral impact on the final outcome.
When player $i$ has a large payoff, then player $i$ has a large impact on the final outcome.

The Shapley value of $i$ turns out to be the expected marginal contribution to a random coalition $S \subseteq F \setminus \{i\}$, 
where the marginal contribution is equal to $V(S \cup \{i\})- V(S)$ \cite{hart1989shapley}. 

\begin{definition}[Shapley value \cite{shapley1952value}]\label{def:shapleyvalue}
    Let $G=(F, V)$, for simplicity sake, we write $\Phi(G)_i$ as $\Phi_i$.
    The Shapley value of the $i^{th}$ player $\Phi_i$ is described as:
    \begin{equation}\label{eq:payoff}
        \Phi_i = \sum\limits_{S \subseteq F \setminus \{i\}} \gamma(\lvert F \setminus \{i\} \rvert, \lvert S \rvert) \cdot (V(S\cup \{i\}) - V(S))
    \end{equation}
    where $n=\lvert F \setminus \{i\} \rvert$, and
    \begin{equation*}
        \gamma(n, m) = \frac{1}{{n \choose m} (n+1)} = \frac{m!(n-m)!}{(n+1)!}
    \end{equation*}
\end{definition}

For this paper, we will define algorithms for a special case of games, monotonic games (Definition~\ref{def:monotonic}).
\begin{definition}[monotonic game \cite{shapley1952value}]
\label{def:monotonic}
    A game is \emph{monotonic} if for all $S,H \subseteq F$, we have, $V(S\cup H) \geq V(S)$.
\end{definition}
Note that when a game is monotonic, every summand in Equation~\eqref{eq:payoff} is non-negative.

\section{Weighted Voting Games}
\label{sec:WeightedVotingGame}
Let us consider a scenario where Shapley values correspond to voting power.
The voting power of a player represents how many instances for which that player has the deciding vote.
Three friends sit around a table.
They are deliberating a grave matter.
Should they get Chinese food for the second weekend in a row?
They decide they should take a vote.
Alice just got a promotion at work.
To celebrate this, her friends agreed to give her three votes.
Bob, who is the youngest of the group, also had good news, an incredible mark on his latest assignment!
So, everyone decided Bob should get two votes.
Charley, who had nothing to celebrate, gets one vote.
The group decides to go out for Chinese food if there are four \emph{yes} votes.
In the end, all of the friends vote for Chinese food.

At the restaurant, they run into their friend David, who is a mathematician.
David is intrigued when he hears about their vote. He begins to wonder how much power each friend had in the vote.
The intuitive answer is that Alice had the most voting power, Bob had the second most, and Charley had the least.
However, David notices something strange. 
Bob does not seem to have a more meaningful influence than Charley.
There are no cases where Bob's two votes would do more than Charley's single vote.
So, David concludes, there must be a more nuanced answer.
Lucky for David, he has heard of cooperative game theory and Shapley values.
So, he might be able to answer his question.

How can Definition~\ref{def:shapleyvalue} be applied to David's problem?
Let us define the game $G=(F,V)$.
The players are Alice (0), Bob (1), and Charley (2) represented as $F=\{0,1,2\}$.
Denote each player's voting weight as $w_0=3$, $w_1=2$, and $w_2=1$.
Recall that the vote threshold was $q=4$.
Then, for any subset $S\subseteq F$, we can define $V$ as:
\begin{equation} \label{eq:voteGameV}
     V(S) =
    \begin{cases}
        1 &\quad \text{if } \sum\limits_{j \in S} w_j \geq q,\\
        0 &\quad \text{otherwise.}
    \end{cases}
\end{equation}
So, $V(S)$ is one if the sum of players' votes in $S$ reaches the threshold of four.
Otherwise, the vote fails, and so $V(S)$ is zero.

In general, this is called a weighted voting game~\cite{matsui2001np}.
One could easily add more players with arbitrary non-negative $w_j$ and $q$. 
Note that weighted voting games fall into the family of monotonic games (Definition~\ref{def:monotonic}).

In this context, the terms in the Shapley value equation have an intuitive meaning.
Take player $i$, and consider a set $S\subseteq F\setminus \{i\}$.
If $V(S\cup \{i\})-V(S)=1$, then the $i^\text{th}$ player is a deciding vote for the set of players $S$.
Otherwise, player $i$ is not a deciding vote.

Thus, for weighted voting games, player $i$'s Shapley value represents a weighted count of how many times $i$ is a deciding vote.
We can work out the Shapley values by hand:
\begin{align*}
    V(\emptyset) &= 0 &\quad V(\{ 0, 1\})  &= 1\\
    V(\{0 \})  &= 0 &\quad V(\{ 0, 2\})  &= 1\\
    V(\{1 \})  &= 0 &\quad V(\{ 1, 2\})  &= 0\\
    V(\{2 \})  &= 0 &\quad V(\{ 0, 1, 2 \})  &= 1\\
\end{align*}
From this, we have:

\begin{footnotesize}
\begin{align*}
    \Phi_0 &= \sum\limits_{S \subseteq F \setminus \{i\}} \gamma(\lvert F \setminus \{i\} \rvert, \lvert S \rvert) \cdot (V(S\cup \{i\}) - V(S)) \\
    &= \gamma(2, 0)\cdot(V(\{0 \})-V(\emptyset)) + \gamma(2,1)\cdot(V(\{ 0, 1\})-V(\{1 \}))\\
      &\quad+ \gamma(2,1)\cdot(V(\{ 0, 2\})-V(\{2 \})) + \gamma(2,2)\cdot(V(\{ 0, 1, 2\})\\
      & -V(\{1, 2\})) = \gamma(2, 0)\cdot(0-0) + \gamma(2,1)\cdot(1-0)\\
      &+ \gamma(2,1)\cdot(1-0) + \gamma(2,2)\cdot(1-0)\\
    &= 2 \cdot \gamma(2, 1) + \gamma(2,2) = 2 \cdot \frac{1!(2-1)!}{(2+1)!} + \frac{2!(2-2)!}{(2+1)!}\\
    &= 2\cdot\frac{1}{6} + \frac{1}{3} = \frac{2}{3}
\end{align*}
\end{footnotesize}

This can be repeated to get,
\begin{equation*}
    \Phi_1, \Phi_2 = \frac{1}{6}
\end{equation*}

In the case of Alice, Bob, and Charley's voting game, it is trivial to calculate their respective Shapley values.
However, what if one-hundred colleagues were choosing a venue for a party, all with different numbers of votes?
In that case, a direct calculation would take $2^{100}$ assessments of $V$!
for this more general case, we need to be clever.

\section{Naive Quantum Representation of Shapley Value Calculation}
\label{sec:QuantumRepresentation}
We represent the Shapley value calculation problem in the quantum format.
Consider an $n+1$ player monotonic game $G$ represented by the pair $(F,V)$, where $F=\{0,1,\dots,n\}$ and $V: \mathcal{P} (F) \xrightarrow{} \mathbb{R}$, with $V(\emptyset) = 0$.
Let us define the function,
\begin{equation}\label{eq:W}
W(S) = V(S\cup\{i\})-V(S), \quad S \subseteq F\setminus\{i\}.    
\end{equation}
We define $W_\text{max}$ as an upper bound of the absolute maximum change in value when adding player $i$ to a subset $S$.
\begin{equation}\label{eq:Wmax}
    W_\text{max} 
    \geq \max\limits_{S \subseteq F\setminus \{i\}}  \lvert W(S) \rvert.
\end{equation}
Let $i\in F$ be a given player. 
Consider the selection binary sequence $x=x_0x_1\dots x_{i-1} x_{i+1}\dots x_n$.
Let $S_x$ be the set of all players $j\in F$ such that $x_j=1$.
Then $S_x$ could encode any player coalition that excludes the player $i$.
Next, define, 
\begin{equation}\label{eq:What}
    \hat{W}(x) := \frac{W\left(S_x\right)}{W_\text{max}}.
\end{equation}

We define the following block diagonal matrix:
\begin{equation}\nonumber
B = \bigoplus_{k=0}^{2^n-1} 
\begin{pmatrix}
\sqrt{1-\phi(k,n)} &   \sqrt{\phi(k,n)}  \\
\sqrt{\phi(k,n)} & -\sqrt{1-\phi(k,n)}  
\end{pmatrix}
\end{equation}
where
\begin{equation}\nonumber
\phi(k,n)
=
\gamma(n,c(k)) 
\cdot
\hat{W}(k)    
\end{equation}
and $c(k)$ is the number of ones in the binary representation of $k$, over $\log n$ bits.
\begin{theorem}
The block diagonal matrix $B(n)$ is unitary.
\end{theorem}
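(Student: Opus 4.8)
The plan is to reduce the statement to a claim about the individual blocks: a block diagonal matrix is unitary if and only if each of its diagonal blocks is unitary, since $B(n)^\dagger B(n) = \bigoplus_{k} M_k^\dagger M_k$ equals the identity exactly when each $M_k^\dagger M_k = I$. Here $M_k$ denotes the $2\times 2$ block
\begin{equation*}
M_k = \begin{pmatrix} \sqrt{1-\phi(k,n)} & \sqrt{\phi(k,n)} \\ \sqrt{\phi(k,n)} & -\sqrt{1-\phi(k,n)} \end{pmatrix}.
\end{equation*}
So I would proceed in two steps: first show each $M_k$ is a well-defined \emph{real} matrix, which amounts to checking $0 \le \phi(k,n) \le 1$, and then verify $M_k^\dagger M_k = I$ by a short direct computation.

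For the first step, recall that $\phi(k,n) = \gamma(n, c(k)) \cdot \hat{W}(k)$ with $\hat{W}(k) = W(S_k)/W_\text{max}$ and $W(S) = V(S\cup\{i\}) - V(S)$. Since $G$ is monotonic (Definition~\ref{def:monotonic}) we have $V(S_k \cup \{i\}) \ge V(S_k)$, hence $W(S_k) \ge 0$; combined with $W_\text{max} \ge \max_S |W(S)| \ge W(S_k)$ from Equation~\eqref{eq:Wmax}, this gives $0 \le \hat{W}(k) \le 1$. For the combinatorial factor, $\gamma(n,m) = \tfrac{1}{\binom{n}{m}(n+1)}$ satisfies $0 < \gamma(n,m) \le \tfrac{1}{n+1} \le 1$ because $\binom{n}{m} \ge 1$. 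Therefore $0 \le \phi(k,n) \le \tfrac{1}{n+1} \le 1$, so both square roots appearing in $M_k$ are real, and $M_k$ is a genuine real matrix.

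For the second step, observe that $M_k$ is real and symmetric, so $M_k^\dagger = M_k^{T} = M_k$ and it suffices to compute $M_k^2$. Writing $a = \sqrt{1-\phi(k,n)}$ and $b = \sqrt{\phi(k,n)}$, the off-diagonal entries of $M_k^2$ are $ab - ab = 0$ while the diagonal entries are $a^2 + b^2 = (1-\phi(k,n)) + \phi(k,n) = 1$, so $M_k^2 = I$. Each block is thus unitary — in fact a self-inverse reflection — and hence $B(n)^\dagger B(n) = \bigoplus_k M_k^\dagger M_k = I$, proving that $B(n)$ is unitary.

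The only non-routine point is establishing $0 \le \phi(k,n) \le 1$, and this is precisely where monotonicity of $G$ (ensuring $W(S_k) \ge 0$) and the defining property of $W_\text{max}$ (ensuring $\hat W(k) \le 1$) are essential: without non-negativity the matrix entries could become complex and the entire construction would fail. The reduction from the block diagonal matrix to its blocks and the $2\times 2$ unitarity verification are standard.
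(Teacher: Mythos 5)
Your proposal takes the same route as the paper's proof, which simply asserts that $B(n)$ is unitary because each $2\times 2$ block is; you additionally supply the details the paper leaves implicit, namely that monotonicity and the definition of $W_\text{max}$ guarantee $0 \le \phi(k,n) \le 1$ (so the entries are real) and that each block squares to the identity. Both steps are correct, so your argument is a complete and valid filling-in of the paper's one-line proof.
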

\begin{proof}
It follows from the fact that each block
$$
\begin{pmatrix}
\sqrt{1-\phi(k,n)} &   \sqrt{\phi(k,n)}  \\
\sqrt{\phi(k,n)} & -\sqrt{1-\phi(k,n)}  
\end{pmatrix}
$$
of $B$ is unitary.
\end{proof}

We construct the quantum system:
\begin{equation}
S = B (H^{\otimes n}\otimes I) \vert 0 \rangle^{\otimes n+1}    
\end{equation}

\begin{theorem} 
The expected values of the rightmost qubits of $S$ is
$\frac{\Phi_i}{2^n \cdot W_{max}}$.
\end{theorem}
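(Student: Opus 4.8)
The plan is to compute the state $S$ explicitly and then recognize the target quantity as the Born-rule probability that a computational-basis measurement of the rightmost (ancilla) qubit returns $1$; since that outcome carries the value $1$ and the other carries the value $0$, this probability \emph{is} the qubit's expected value.

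First I would push the Hadamard layer through: $(H^{\otimes n}\otimes I)\vert 0\rangle^{\otimes n+1}=2^{-n/2}\sum_{k=0}^{2^n-1}\vert k\rangle\vert 0\rangle$, a uniform superposition over the $n$-qubit coalition register with the ancilla still in $\vert 0\rangle$. Next, because $B$ is block diagonal and its $k$-th $2\times 2$ block acts on $\mathrm{span}\{\vert k\rangle\vert 0\rangle,\ \vert k\rangle\vert 1\rangle\}$, the coalition register is left untouched and each ancilla is rotated conditionally on $k$, $B\vert k\rangle\vert 0\rangle=\sqrt{1-\phi(k,n)}\,\vert k\rangle\vert 0\rangle+\sqrt{\phi(k,n)}\,\vert k\rangle\vert 1\rangle$, so by linearity
$$S=\frac{1}{\sqrt{2^n}}\sum_{k=0}^{2^n-1}\Bigl(\sqrt{1-\phi(k,n)}\,\vert k\rangle\vert 0\rangle+\sqrt{\phi(k,n)}\,\vert k\rangle\vert 1\rangle\Bigr).$$
At this point I would record that monotonicity of $G$ forces $W(S)\ge 0$, which together with Equation~\eqref{eq:Wmax} gives $\hat W(k)\in[0,1]$, while $\gamma(n,c(k))\in(0,1]$; hence $\phi(k,n)\in[0,1]$ and every square root (and the preceding unitarity theorem) is legitimate.

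Collecting the squared amplitudes of the terms ending in $\vert 1\rangle$ then yields $\Pr[\text{ancilla}=1]=2^{-n}\sum_{k=0}^{2^n-1}\phi(k,n)$. Since $\phi(k,n)=\gamma(n,c(k))\,\hat W(k)=\gamma\!\left(n,\lvert S_k\rvert\right)\,W(S_k)/W_{\max}$ depends on $k$ only through the coalition $S_k$, and $k\mapsto S_k$ is a bijection from $\{0,\dots,2^n-1\}$ onto the subsets of $F\setminus\{i\}$, I would reindex the sum over subsets and use Equation~\eqref{eq:W} to get
$$\frac{1}{2^n}\sum_{k=0}^{2^n-1}\phi(k,n)=\frac{1}{2^n\,W_{\max}}\sum_{S\subseteq F\setminus\{i\}}\gamma\!\left(n,\lvert S\rvert\right)\bigl(V(S\cup\{i\})-V(S)\bigr).$$
By Definition~\ref{def:shapleyvalue} (with $n=\lvert F\setminus\{i\}\rvert$) the remaining sum is exactly $\Phi_i$, so the expected value of the rightmost qubit equals $\Phi_i/(2^n\,W_{\max})$, which is the claim.

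The argument is mostly bookkeeping rather than substance, and the only real obstacle is the application of $B$ in the second step: one must match the block index $k$ of $B$ with the correct two-dimensional subspace of the $(n{+}1)$-qubit register — equivalently, keep the ancilla as the least-significant tensor factor within each block — since getting that ordering wrong is the easiest way to introduce a spurious power of two. Everything after that is routine reindexing, which is fitting, as this statement is really the correctness check for the encoding that the rest of the paper builds on.
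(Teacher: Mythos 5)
Your proposal is correct and follows essentially the same route as the paper's proof: Hadamards produce the uniform superposition, the block structure of $B$ rotates the ancilla conditionally on $k$, and the probability of measuring $\ket{1}$ reindexes into the Shapley sum. Your added observation that monotonicity and Equation~\eqref{eq:Wmax} force $\phi(k,n)\in[0,1]$ is a worthwhile bit of rigor the paper leaves implicit, but it does not change the argument.
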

\begin{proof}
It follows from the fact that
$$
(H^{\otimes n}\otimes I) \vert 0 \rangle^{\otimes n+1}
=
\sum_{k=0}^{2^n-1} \frac{1}{\sqrt{2^n}} \ket{k}\ket{0}
$$
and the following sequence of equivalences:
\begin{equation}\nonumber
\begin{split}
\sum_{k=0}^{2^n-1} & \left( B_{k+1,k} \right)^2  \\
 & = \sum_{k=0}^{2^n-1} \phi(k,n) = \sum_{k=0}^{2^n-1} \gamma(n,c(k)) \cdot W(k) \\
 & =\sum\limits_{S \subseteq F \setminus \{i\}} \gamma(\lvert F \setminus \{i\} \rvert, \lvert S \rvert) \cdot \frac{\left( V(S\cup \{i\}) - V(S) \right)}{W_{max}}
\end{split}
\end{equation}
Hence, the probability of measuring a one in the last qubit of the quantum system $S$ is
\begin{equation*}
\frac{1}{2^n \cdot W_{max}}
\sum_{S \subseteq F \setminus \{i\}} 
\gamma(\lvert F \setminus \{i\} \rvert, \lvert S \rvert) 
\cdot \left( V(S\cup \{i\}) - V(S) \right)
\end{equation*}
which is equivalent to 
$\frac{\Phi_i}{2^n \cdot W_{max}}$.

\end{proof}
The Shapley value $\Phi_i$ can be obtained by repeatedly creating the quantum system $S$,
measuring their last qubit, taking the average, and finally multiplying by $2^n \cdot W_{max}$.
As will be briefly discussed in Section \ref{sec:algorithm}, the value can also be extracted with a more efficient strategy.
Although, this representation requires the preparation of a quantum state with an exponential number of terms ($2^n$).
In the following section, we develop a more efficient solution for Shapley value calculation of monotonic games.

\section{Quantum Algorithm for Monotonic Games}
\label{sec:algorithm}

Consider an $n+1$ player game $G$ represented by the pair $(F,V)$, where $F=\{0,1,\dots,n\}$ and $V: \mathcal{P} (F) \xrightarrow{} \mathbb{R}$, with $V(\emptyset) = 0$.
A naive approach to finding the $i^\text{th}$ player's Shapley value is through direct calculation using the Shapley Equation~\eqref{eq:payoff}, 
completing the task in $\mathcal{O}(2^n)$ assessments of $V$.
For structured games, it is occasionally possible to calculate Shapley values more efficiently.
Otherwise, the only option is random sampling \cite{castro2009polynomial}.
In each case, there are substantial trade-offs, in this section we propose a quantum algorithm which has some substantial advantages.

Suppose we have a quantum representation of the function $W(S)$, defined in Equation~\eqref{eq:W},
which comprises two registers, a player register, and a utility register.
In the player register, the computational basis states represent different subsets of players, 
where the $j^\text{th}$ qubit represents the $j^\text{th}$ player.
In this encoding, the $j^\text{th}$ qubit being $\ket{1}$ represents $j$'s inclusion in the subset, and $\ket{0}$ represents its exclusion.
Thus, every possible subset of players has a corresponding basis state.
It follows that representing every subset of players simultaneously in a quantum superposition is possible.
The one-qubit utility register encodes the output of $W$.

Computing the Shapley value method consists of the following steps:
\newcounter{algorithmicSteps}
\begin{enumerate}
    \item Represent every possible subset of players in a quantum superposition, excluding player $i$, with amplitudes corresponding to weights ($\gamma(n,m)$) in the Shapley Equation~\eqref{eq:payoff}.
    \item Perform the quantum version of $W$ controlled by the player register, and any auxiliary registers, on the utility register.
    \setcounter{algorithmicSteps}{\value{enumi}}
\end{enumerate}
From this point, it is possible to approximate the Shapley value $\Phi_i$ if one has the expected value of measuring the utility register.
\begin{enumerate}
    \setcounter{enumi}{\value{algorithmicSteps}}
    \item To approximate the expected value of measuring the utility register, one can either:
    \begin{itemize}
        \item Use $\mathcal{O}(\epsilon^{-2})$ repetitions of steps 1 and 2 followed by a measurement of the utility register.
        \item Perform amplitude estimation with $\mathcal{O}(\epsilon^{-1})$ iterations, where step 1 is $\mathcal{A}$ and step 2 is $W$ as defined in \cite{montanaro2015quantum}.
        In opposition to the previous choice, this process increases circuit depth.
    \end{itemize}
\end{enumerate}
In either case, the error in approximating the expected value for measuring the utility register is within $\pm\epsilon$ with some predetermined likelihood, where $\epsilon$ is a small positive real number.

The details of the method will now be described.
For the sake of simplicity, we assume superadditivity; however, this algorithm could be easily extended to include general cooperative games~\cite{burge2023quantum}.
The goal is to efficiently approximate the Shapley value $\Phi_i$ of a given player $i\in F$.
Suppose quantum representation of function of function $\hat{W}(x)$, defined in Equation~\eqref{eq:What}, is given as:
\begin{equation*}
    U_W \ket{x}\ket{0} := \ket{x} \left( \sqrt{1-\hat{W}(x)}\ket{0} + \sqrt{\hat{W}(x)}\ket{1} \right),
\end{equation*}
where $\ket{x}$ is a vector in the computational basis (i.e., $x\in \{0,1\}^n$).

A critical step for the algorithm is to approximate the weights of the Shapley value function.
These weights correspond perfectly to a slightly modified beta function.
\begin{definition}[Beta function]
We define the beta function as:
\begin{equation*}
\beta_{n,m} = \int\limits_0^1 x^{m}(1-x)^{n-m} dx,\quad 0\leq m \leq n, \quad n, m \in \mathbb{N}.
\end{equation*}
\end{definition}
\begin{lemma}
\label{lem:recurrence}
We have the following recurrence relationship:
\begin{equation*}
\beta_{n,0} = \beta_{n,n} = \frac{1}{n + 1}
\mbox{ and } 
\beta_{n,m} = \frac{m}{n-(m-1)}\beta_{n,m-1}.
\end{equation*}
\end{lemma}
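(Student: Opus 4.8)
The plan is to establish the two base cases by direct integration and then obtain the recurrence from a single integration by parts.

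First I would compute $\beta_{n,0} = \int_0^1 (1-x)^n\,dx$ by antidifferentiating (or substituting $u = 1-x$), which gives $\frac{1}{n+1}$; symmetrically $\beta_{n,n} = \int_0^1 x^n\,dx = \frac{1}{n+1}$. One can in fact collapse these into a single computation by first observing the symmetry $\beta_{n,m} = \beta_{n,n-m}$, obtained from the change of variables $x \mapsto 1-x$, so that the two endpoints $m=0$ and $m=n$ coincide.

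For the recurrence I would integrate $\beta_{n,m} = \int_0^1 x^m (1-x)^{n-m}\,dx$ by parts, taking $u = x^m$ and $dv = (1-x)^{n-m}\,dx$, so that $du = m\,x^{m-1}\,dx$ and $v = -\frac{(1-x)^{n-m+1}}{n-m+1}$. The boundary term $\bigl[-x^m \frac{(1-x)^{n-m+1}}{n-m+1}\bigr]_0^1$ vanishes: the factor $x^m$ kills it at $x=0$ (using $m \ge 1$) and the factor $(1-x)^{n-m+1}$ kills it at $x=1$ (using $m \le n$). What survives is $\frac{m}{n-m+1}\int_0^1 x^{m-1}(1-x)^{n-m+1}\,dx$, and matching exponents — the powers are $m-1$ and $n-(m-1)$ — identifies this integral as $\beta_{n,m-1}$. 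Since $n-m+1 = n-(m-1)$, this is exactly $\beta_{n,m} = \frac{m}{n-(m-1)}\,\beta_{n,m-1}$.

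There is no real obstacle here; the only point requiring care is checking that the integration-by-parts boundary term genuinely vanishes at \emph{both} endpoints, which is precisely why the recurrence is quantified over $1 \le m \le n$ while the extreme cases $m=0$ and $m=n$ are disposed of separately as base cases. No deeper difficulty is anticipated.
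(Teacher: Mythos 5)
Your proposal is correct and follows essentially the same route as the paper: direct integration for the base cases $\beta_{n,0}=\beta_{n,n}=\frac{1}{n+1}$, and a single integration by parts with $u=x^m$, $dv=(1-x)^{n-m}\,dx$ whose boundary term vanishes, yielding $\beta_{n,m}=\frac{m}{n-(m-1)}\beta_{n,m-1}$. Your added remarks on the symmetry $\beta_{n,m}=\beta_{n,n-m}$ and on why the boundary term vanishes at both endpoints are correct but not a substantive departure.
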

\begin{proof}
There are two cases.

\noindent
Case~1 ($m$ is equal to zero, or $n$). We have the following integration.
\begin{equation*}
\beta_{n,0} = \int\limits_0^1 (1-x)^n dx
= - \frac{(1-x)^{n+1}}{n+1}\bigg\rvert_0^1
= \frac{1}{n+1}
\end{equation*}
A nearly identical calculation can be used to show $\beta_{n,n}$ is equal to $\frac{1}{n+1}$.

\noindent
Case~2 ($0<m<n$). We have the following partial integration.
 \begin{align*}
\beta_{n, m} &= \int\limits_0^1 x^{m}(1-x)^{n-m} dx \\
        &= \frac{x^{m}(1-x)^{n-(m-1)}}{n-(m-1)}  \bigg\rvert_0^1 \\
        &- \int\limits_0^1 \frac{-m}{n-(m-1)} x^{m-1} (1-x)^{n-(m-1)}dx\\
        &= 0 + \frac{m}{n-(m-1)} \int\limits_0^1 x^{m-1} (1-x)^{n-(m-1)}dx\\
        &= \frac{m}{n-(m-1)} 
        \beta_{n,m-1}
 \end{align*}
\end{proof}

\begin{theorem}
The beta function $\beta_{n,m}$ is equal to the Shapley weight function $\gamma(n,m)$, with $0\leq m\leq n$ and $m,n\in \mathbb{N}$.
\end{theorem}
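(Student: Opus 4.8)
The plan is to prove the identity by induction on $m$, using Lemma~\ref{lem:recurrence} as the engine. First I would establish the base case $m=0$: Lemma~\ref{lem:recurrence} gives $\beta_{n,0} = \frac{1}{n+1}$, while directly from the definition $\gamma(n,0) = \frac{0!\,n!}{(n+1)!} = \frac{1}{n+1}$, so the two agree. (The case $m=n$ is symmetric and already handled by the lemma, but it also follows from the induction below once the step is in place.)

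Next I would carry out the inductive step. Assume $\beta_{n,m-1} = \gamma(n,m-1) = \frac{(m-1)!\,(n-m+1)!}{(n+1)!}$ for some $1 \le m \le n$. Applying the recurrence from Lemma~\ref{lem:recurrence},
\begin{align*}
\beta_{n,m}
&= \frac{m}{n-(m-1)}\,\beta_{n,m-1}
= \frac{m}{n-m+1}\cdot\frac{(m-1)!\,(n-m+1)!}{(n+1)!} \\
&= \frac{m!\,(n-m)!}{(n+1)!}
= \gamma(n,m),
\end{align*}
where the only manipulation is $m\cdot(m-1)! = m!$ and $\frac{(n-m+1)!}{n-m+1} = (n-m)!$. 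This closes the induction and proves $\beta_{n,m} = \gamma(n,m)$ for all $0 \le m \le n$.

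An equivalent way to phrase the argument, which I might prefer for clarity, is to observe that $\gamma(n,m) = \frac{m!(n-m)!}{(n+1)!}$ satisfies the \emph{same} initial condition $\gamma(n,0) = \frac{1}{n+1}$ and the \emph{same} recurrence $\gamma(n,m) = \frac{m}{n-(m-1)}\gamma(n,m-1)$ as $\beta_{n,m}$ — the latter being a one-line factorial check — so the two sequences coincide term by term. There is no real obstacle here; the only thing that needs a moment's care is verifying that $\gamma$ obeys the recurrence, which is the routine cancellation shown above. The substance of the work has already been done in Lemma~\ref{lem:recurrence}; this theorem is essentially its corollary.
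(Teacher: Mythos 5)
Your proof is correct and follows essentially the same route as the paper: induction on $m$ with base case $\beta_{n,0}=\frac{1}{n+1}=\gamma(n,0)$ and the inductive step driven by the recurrence of Lemma~\ref{lem:recurrence}, differing only in indexing and in the explicit factorial cancellation. No gaps.
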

\begin{proof}
The proof is by induction on $m$.

\noindent
Base case ($m=0$). 
According to Case~1 of Lemma~\ref{lem:recurrence},
we have that $\beta_{n,0}$ is equal to $\frac{1}{n+1}$, which is equal to $\gamma(n,0)$.

\noindent
Inductive step ($m>0$). 
Suppose $\beta_{n,k}$ is equal to $\gamma(n,k)$, $k\in\mathbb{N}$, we need to show $\beta_{n,k+1}=\gamma(n,k+1)$, $0 \leq k < n$.
According to Case~2 of Lemma~\ref{lem:recurrence},
$\beta_{n,k+1}$ is equal to $\frac{k+1}{n-k}\beta_{n,k}$.
Using the inductive hypothesis, the latter is equivalent to
$$
\frac{k+1}{n-k}\gamma(n,k) = \frac{k+1}{n-k} \cdot \frac{k!(n-k)!}{(n+1)!}
$$
which matches the definition of $\gamma(n, k+1)$.
\end{proof}
The beta function, and by extension the Shapley weights, 
are approximated with Riemann sums which roughly represent the area under the function $x^m(1-x)^{n-m}$ using partitions of the interval $[0,1]$ with respect to $x$.
It will become apparent that function $x^m(1-x)^{n-m}$ can be implemented efficiently on a quantum computer.
\begin{definition}[Riemann sum]
    A \emph{Riemann sum \cite[page~276]{ross2013elementary}} of a function $f$ with respect to a partition $P=(t_0,\cdots,t_s)$ of the interval $[a,b]$ is an approximation of the integral of $f$ from $a$ to $b$ of the form:
    \begin{equation*}
        \sum\limits_{k=0}^{s-1} (t_{k+1} - t_k) \cdot f(x_k) 
    \end{equation*}
    Where $t_{k+1}-t_k$ is the width of the subinterval, and $f(x_k), x\in[t_k,t_{k+1}]$, is the height.
\end{definition}

The following is the initial quantum state:
\begin{equation}
    \ket{\psi_0} = \ket{0}_\text{Pt} \otimes \ket{0}_\text{Pl} \otimes \ket{0}_\text{Ut},
\end{equation}
where Pt, Pl, and Ut respectively denote the partition, player, and utility registers.
Pt is used it to prepare the $\gamma(n,m)$ weights.
Suppose the number of qubits $\ell\in \mathbb{N}$ in Pt is $\ell=\lceil \log_2an \rceil$, for some fixed $a>0$, thus $\ell=\mathcal{O}(\log an)$.
Then the partition register can be loaded with an arbitrary quantum state in $\mathcal{O}(an)$ time \cite{plesch2011quantum}.
Let the number of qubits in the player register be $n$, one qubit per player excluding player $i$.
Let the number of qubits in the utility register be one.

\medskip

Consider the following function,
\begin{equation*}
    t_\ell(k)=\sin^2\left( \frac{\pi k}{2^{\ell + 1}} \right) \mbox{ with } k=0,1,\ldots,2^{\ell}
\end{equation*}
with which the partition $P_\ell=\left(t_\ell(k)\right)_{k=0}^{2^\ell}$ of interval $[0,1]$ is constructed.
This partition is computationally simple to implement on a quantum computer.
Finally, let us define $w_\ell(k)$ to be the width of the $k^\text{th}$ subinterval of $P_\ell$, $w_\ell(k) = t_\ell(k+1)-t_\ell(k)$,
$k=0,1,\ldots,2^{\ell}-1$.

Let us prepare the partition register to be,
\begin{equation*}
    \sum\limits_{k=0}^{2^\ell-1} \sqrt{w_\ell(k)} \ket{k}.
\end{equation*}
the new state of the quantum system becomes,
\begin{equation*}
    \ket{\psi_1} = \sum\limits_{k=0}^{2^\ell-1} \sqrt{w_\ell(k)} \ket{k}_\text{Pt} \ket{0}_\text{Pl} \ket{0}_\text{Ut}.
\end{equation*}

\begin{figure}
    \centering
    \includegraphics[width=\columnwidth]{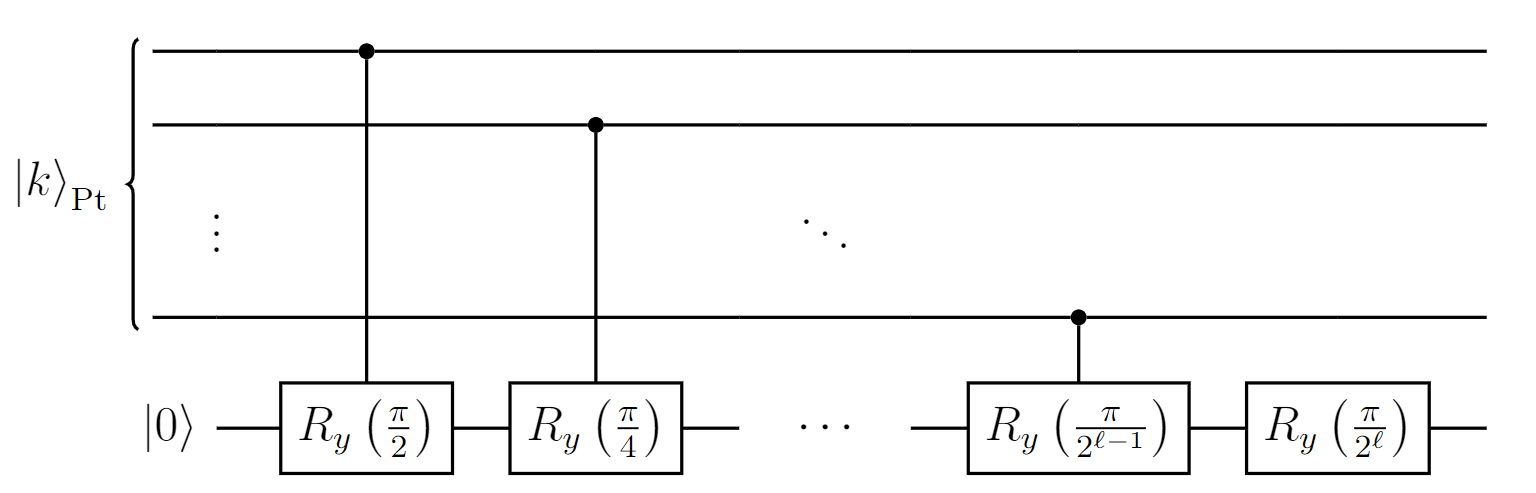}
    \caption{This circuit $R$ is a controlled rotation, where $R_y(\theta) = (\cos(\theta/2), -\sin(\theta/2);\sin(\theta/2), \cos(\theta/2))$.
    (Note: Library used for visualizing circuits can be found here: \cite{kay_2023})
    }
    \label{fig:controlledRotationCircuit}
\end{figure}

Next, perform a series of controlled rotations $R$ (circuit in Figure~\ref{fig:controlledRotationCircuit}) of the form,
\begin{equation*}
    R\ket{k}\ket{0} := \ket{k} \left(\sqrt{1-t'_\ell(k)}\ket{0} + \sqrt{t'_\ell(k)}\ket{1} \right),
\end{equation*}
where $t'_\ell(k) = t_{\ell+1}(2k+1)$ is used to sample the height of the $k^\text{th}$ subinterval in the Riemann sum. 
Note that the value of $t'_\ell(k)$ is always somewhere in the range $[t_\ell(k),t_\ell(k+1)]$.

Unitary $R$ is performed on each qubit in the player register, controlled by the partition register.
The entirety of the applications of $R$ can be performed with $\mathcal{O}(an\log an)$ gates in $\mathcal{O}(an)$ layers and yields the quantum state:

\begin{scriptsize}
\begin{equation*}
    \ket{\psi_2} = \sum\limits_{k=0}^{2^\ell-1} \sqrt{w_\ell(k)} \ket{k}_\text{Pt} \left( \sqrt{1-t_\ell'(k)}\ket{0} + \sqrt{t_\ell'(k)}\ket{1} \right)^{\otimes n} \ket{0}_\text{Ut}.
\end{equation*}
\end{scriptsize}

Note that the player register is of size $n$ qubits.
Let $H_m$ be the set of binary numbers of hamming distance $m$ from $0$ in $n$ bits, then we can rewrite $\ket{\psi_2}$ as:

\begin{scriptsize}
\begin{equation*}
    \ket{\psi_2} = \sum\limits_{k=0}^{2^\ell-1} \sqrt{w_\ell(k)} \ket{k}_\text{Pt} \cdot \sum\limits_{m=0}^n \sqrt{t_\ell'(k)^m (1-t_\ell'(k))^{n-m}} \cdot \sum\limits_{h\in H_m}\ket{h}_\text{Pl} \ket{0}_\text{Ut}
\end{equation*}
\end{scriptsize}

Note that, with this encoding style for $S_x$, $x$'s hamming distance from $0$ in $n$ bits is equal to the size of $S_x$.
In other words, if $h\in H_m$, then $S_h$ contains $m$ players.

\begin{example}
\label{ex:rotationexample}
    Let us consider an example where the number of players is three ($n$ is two).
    We have,
    \begin{align*}
        \Big( \sqrt{1-t_\ell'(k)}&\ket{0} + \sqrt{t_\ell'(k)}\ket{1} \Big) ^{\otimes 2} = 
        \\ 
        &\quad\sqrt{(1-t_\ell'(k))^2}\ket{00} +
        \sqrt{t_\ell'(k)(1-t_\ell'(k))}\ket{01} 
        \\
        &\quad+\sqrt{t_\ell'(k)(1-t_\ell'(k))}\ket{10} + 
        \sqrt{t_\ell'(k)^2}\ket{11}.
    \end{align*}

\begin{figure*}
    \centering
    \includegraphics[scale=0.7]{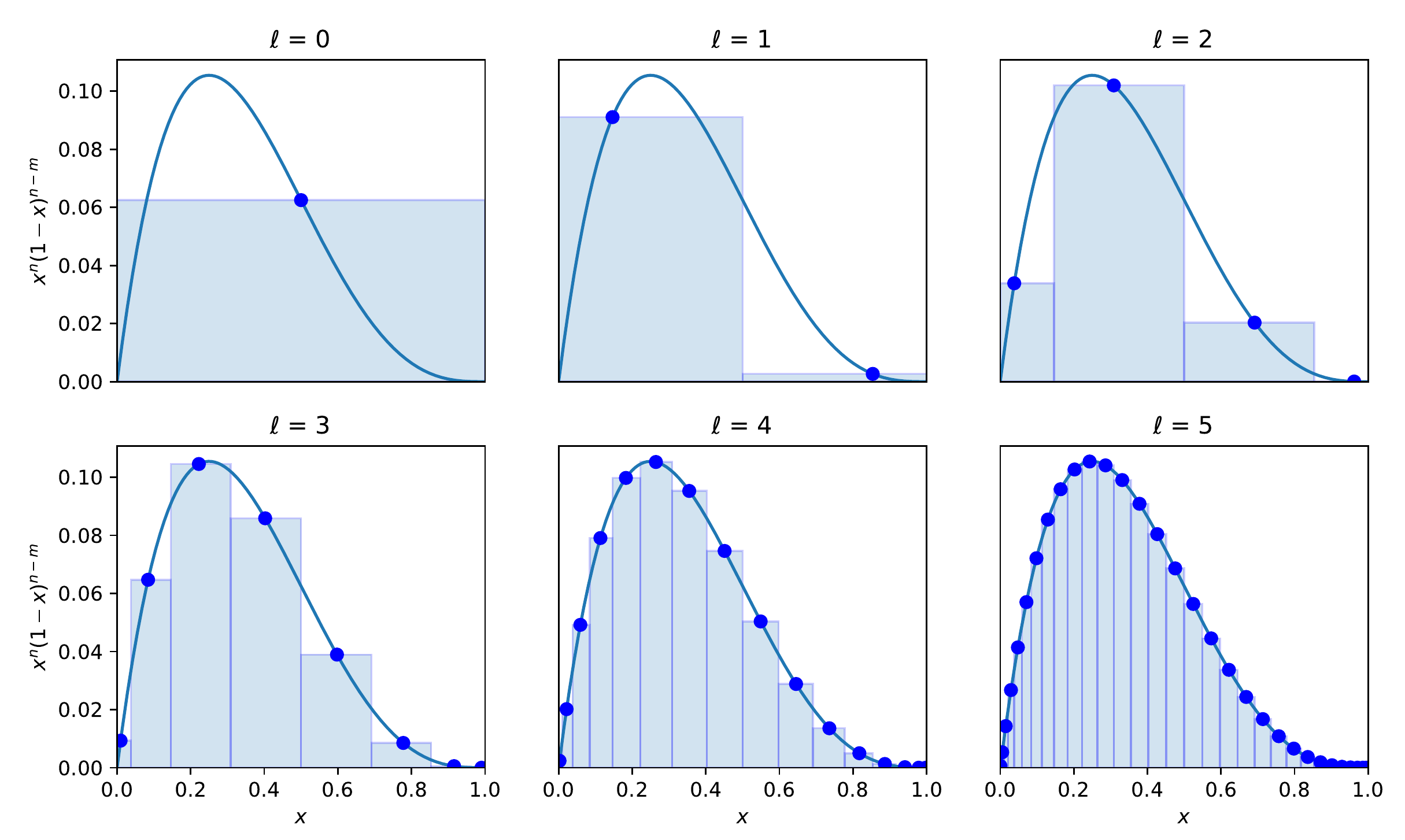}
    \caption{Visual representation of $\beta_{n,m}$ being approximated using Riemann sums with partition $P_\ell$ over function $x^m(1-x)^{n-m}$, $t \in [0,1]$, $n=4$, $m=1$.
    The $k^\text{th}$ rectangle's height is $(t'_\ell(k))^m(1-~t'_\ell(k))^{n-m}$, and its width is $w_\ell(k)$. 
    }
    \label{fig:binomialAreaApprox}
\end{figure*}

Note that $\ket{00}$ is hamming distance $0$ from $\ket{00}$, $\ket{01}$ and $\ket{10}$ are hamming distance $1$ from $\ket{00}$, and $\ket{11}$ is hamming distance $2$ from $\ket{00}$. With this knowledge in hand, we can rewrite the quantum state of Example~\ref{ex:rotationexample} as,
    
\begin{scriptsize}
    \begin{equation*}
        \sqrt{(1-t_\ell'(k))^2}\sum\limits_{h\in H_0}\ket{h} + \sqrt{t_\ell'(k)(1-t_\ell'(k))}\sum\limits_{h\in H_1}\ket{h} + \sqrt{t_\ell'(k)^2}\sum\limits_{h\in H_2}\ket{h}
    \end{equation*}
\end{scriptsize}

This can now be arranged into the  general form,
    \begin{equation*}
    \sum\limits_{m=0}^n \sqrt{t_\ell'(k)^m(1-t_\ell'(k))^{n-m}}\sum\limits_{h\in H_m}\ket{h}, \quad n=2.
    \end{equation*}
This completes the example.
\end{example}

Rearranging the quantum state $\ket{\psi_2}$ gives,

\begin{scriptsize}
\begin{equation*}
    \ket{\psi_2} = \sum\limits_{m=0}^n \sum\limits_{h\in H_m} \sum\limits_{k=0}^{2^\ell-1} \sqrt{w_\ell(k) t_\ell'(k)^m (1-t_\ell'(k))^{n-m}} \ket{k}_\text{Pt} \ket{h}_\text{Pl} \ket{0}_\text{Ut}.
\end{equation*}
\end{scriptsize}

Next, we perform $U_W$ on the utility register controlled by the player register.
For convenience, let us for the moment write $U_W \ket{h}\ket{0} = \ket{h}\ket{W(h)}$, where,
\begin{equation*}
    \ket{W(h)} = \sqrt{1-\hat{W}(h)} \ket{0} + \sqrt{\hat{W}(h)} \ket{1}
\end{equation*}

Applying $U_W\ket{h}_\text{Pl}\ket{0}_\text{Ut}$ gives $\ket{\psi_3}$, which is equal to,

\begin{scriptsize}
\begin{align*}
     \sum\limits_{m=0}^n \sum\limits_{h\in H_m} \sum\limits_{k=0}^{2^\ell-1} \sqrt{w_\ell(k) t_\ell'(k)^m (1-t_\ell'(k))^{n-m}} \ket{k}_\text{Pt} \ket{h}_\text{Pl} \ket{W(h)}_\text{Ut}.
\end{align*}
\end{scriptsize}

This operation is wholly dependent on the game being analyzed and its complexity.
Assuming the algorithm is being implemented with a look-up table, one could likely use qRAM \cite{giovannetti2008quantum}.
This approach has a time complexity of $\mathcal{O}(n)$ at the cost of $\mathcal{O}(2^n)$ qubits for storage.
However, depending on the problem, there are often far less resource-intense methods of implementing $U_W$, 
as will be seen with the implementation of weighted voting games (Section~\ref{sec:example}).

This is the final quantum state.
Let us now analyze this state through the lens of density matrices.
Taking the partial trace with respect to the partition and player registers yields,

\begin{scriptsize}
\begin{align*}
    \tr_{\text{Pt},\text{Pl}}\left(\ketbra{\psi_3}{\psi_3}\right)=
    \sum\limits_{m=0}^n \sum\limits_{h\in H_m} & \left(\sum\limits_{k=0}^{2^\ell-1} w_\ell(k) t_\ell'(k)^m (1-t_\ell'(k))^{n-m} \right)\\
    & \cdot \ket{W(h)}_\text{Ut} \bra{W(h)}_\text{Ut}.
\end{align*}
\end{scriptsize}

\begin{theorem}
    The Riemann sum using partition $P_\ell$ to approximate area under $x^m(1-x)^{n-m}$ for $x\in[0,1]$ asymptotically approaches $\gamma(n,m)$.
    Formally,
    \begin{equation} \label{eq:riemannSumApproachesGamma}
        \lim\limits_{\ell \rightarrow \infty} \sum\limits_{k=0}^{2^\ell-1} w_\ell(k) t_\ell'(k)^m (1-t_\ell'(k))^{n-m} = \gamma(n,m)
    \end{equation}
\end{theorem}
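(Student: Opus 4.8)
The plan is to recognize the left-hand side of~\eqref{eq:riemannSumApproachesGamma} as a bona fide Riemann sum of the continuous function $f(x)=x^m(1-x)^{n-m}$ on $[0,1]$, associated with the tagged partition consisting of $P_\ell=(t_\ell(k))_{k=0}^{2^\ell}$ together with the sample points $t'_\ell(k)$, and then to invoke the standard convergence theorem: the Riemann sums of an integrable function converge to its integral as the mesh of the partition tends to zero. The already-established identity $\beta_{n,m}=\gamma(n,m)$, combined with $\beta_{n,m}=\int_0^1 x^m(1-x)^{n-m}\,dx$, then closes the argument.

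First I would check that $P_\ell$ is genuinely a partition of $[0,1]$: since $\sin^2$ is strictly increasing on $[0,\pi/2]$ and the arguments $\pi k/2^{\ell+1}$, $k=0,\dots,2^\ell$, traverse $[0,\pi/2]$ in increasing order, we get $0=t_\ell(0)<t_\ell(1)<\cdots<t_\ell(2^\ell)=1$. Next I would confirm that each tag lies in its subinterval, i.e. $t_\ell(k)\le t'_\ell(k)\le t_\ell(k+1)$; writing $t'_\ell(k)=\sin^2(\pi(2k+1)/2^{\ell+2})$, $t_\ell(k)=\sin^2(\pi(2k)/2^{\ell+2})$, and $t_\ell(k+1)=\sin^2(\pi(2k+2)/2^{\ell+2})$, this is immediate from monotonicity of $\sin^2$ on $[0,\pi/2]$, all three arguments lying in that range for $0\le k\le 2^\ell-1$ (this is the inequality already noted in the text right after the definition of $t'_\ell$). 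Hence the left side of~\eqref{eq:riemannSumApproachesGamma} is exactly $\sum_{k=0}^{2^\ell-1} w_\ell(k)\,f(t'_\ell(k))$, a Riemann sum in the sense of the Definition above.

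The one computation worth carrying out is the mesh estimate. Using the product-to-sum identity $\sin^2 A-\sin^2 B=\sin(A+B)\sin(A-B)$ with $A=\pi(k+1)/2^{\ell+1}$ and $B=\pi k/2^{\ell+1}$,
\begin{align*}
w_\ell(k)&=\sin\!\left(\frac{\pi(2k+1)}{2^{\ell+1}}\right)\sin\!\left(\frac{\pi}{2^{\ell+1}}\right)\\
&\le \sin\!\left(\frac{\pi}{2^{\ell+1}}\right)\le \frac{\pi}{2^{\ell+1}},
\end{align*}
a bound independent of $k$, so the mesh $\|P_\ell\|=\max_k w_\ell(k)\to 0$ as $\ell\to\infty$. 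Because $f$ is continuous on the compact interval $[0,1]$ it is Riemann integrable, so the convergence theorem for Riemann sums with vanishing mesh applies and yields $\lim_{\ell\to\infty}\sum_{k=0}^{2^\ell-1} w_\ell(k)\,f(t'_\ell(k))=\int_0^1 x^m(1-x)^{n-m}\,dx=\beta_{n,m}=\gamma(n,m)$.

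I do not expect a real obstacle here: the only mildly delicate point is the mesh bound, which the trig identity handles cleanly, and one should be careful to invoke the convergence theorem in the form ``mesh $\to 0$ implies convergence for every integrable integrand'' rather than the weaker assertion that merely some sequence of partitions works.
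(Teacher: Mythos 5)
Your proposal is correct and follows essentially the same route as the paper's proof: identify the sum as a tagged Riemann sum of $f(x)=x^m(1-x)^{n-m}$ over $P_\ell$, show the mesh vanishes as $\ell\to\infty$, and invoke the convergence theorem for Riemann sums of an integrable function together with $\beta_{n,m}=\gamma(n,m)$. The only difference is cosmetic: you bound the mesh via the identity $\sin^2 A-\sin^2 B=\sin(A+B)\sin(A-B)$, giving $w_\ell(k)\le\pi/2^{\ell+1}$, whereas the paper compares derivatives of $\sin^2((\pi/2)x)$ and $(\pi/2)x$ to get $w_\ell(k)\le\pi^2/2^{\ell+2}$; both suffice, and your version of the verification (including the explicit check that the tags lie in their subintervals) is if anything slightly more careful.
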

\begin{proof}
Let $f(x)=x^m(1-x)^{n-m}$, and recall our definition for partition of $[0,1]$, $P_\ell=\left(t_\ell(k)\right)_{k=0}^{2^\ell}$.

Define $\text{mesh}(P_\ell)=\sup\{w_\ell(k): k=0,1,\dots,2^\ell -1\}$ \cite[page~275]{ross2013elementary}.
Since,
\begin{equation*}
    w_\ell(k) = \sin^2\left( \frac{\pi (k+1)}{2^{\ell+1}} \right) - \sin^2\left( \frac{\pi k}{2^{\ell+1}} \right),
\end{equation*}
we can bound $w_\ell(k)$,
\begin{equation*}
    w_\ell(k) \leq  \frac{\pi}{2} \left(\left( \frac{\pi (k+1)}{2^{\ell+1}} \right)- \left( \frac{\pi k}{2^{\ell+1}} \right) \right) = \frac{\pi^2}{2^{\ell+2}}.
\end{equation*}
This is valid because $d/dx \sin^2((\pi/2)x) \leq d/dx (\pi/2)x$ for all $x\in[0,1]$.
It is then clear that as $\ell\rightarrow \infty$, $\text{mesh}(P_\ell)\rightarrow0$.
Additionally, $f$ is integrable over $[a,b]$, as it is a polynomial and hence continuous \cite[page~278]{ross2013elementary}.

By \cite[page~278,~Corollary~32.10]{ross2013elementary}, as $\ell$ increases, any Riemann sum of $f$ using partition $P_\ell$ converges to:
\begin{equation*}
    \int_0^1 x^m(1-x)^{n-m},
\end{equation*}
which is equal to $\gamma(n,m)$.
\end{proof}

The Riemann sum approximation of the modified beta function, $\beta_{n,m}=\gamma(n,m)$, is visualized in Figure~\ref{fig:binomialAreaApprox}.
Applying a Riemann sum approximation for $\gamma(n,m)$, we have,
\begin{equation*}
    \tr_{\text{Pt},\text{Pl}}(\ketbra{\psi_3}{\psi_3})\approx\sum\limits_{m=0}^n \sum\limits_{h\in H_m} \gamma(n,m) \ket{W(h)}_\text{Ut} \bra{W(h)}_\text{Ut}.
\end{equation*}

Finally, suppose we measuring the utility register in the computational basis.
This yields the following expected value with empirical error less than $\mathcal{O}(a^{-1})$ (Section~\ref{sec:example}):
\begin{equation*}
    \sum\limits_{m=0}^n \sum\limits_{h\in H_m} \gamma(n,m) \hat{W}(h)
\end{equation*}

Plugging in the definition for $\hat{W}$, we have
\begin{equation*}
    \frac{1}{W_\text{max}} \sum\limits_{m=0}^n \sum\limits_{h\in H_m} \gamma(n,m) \left(V\left(S_h \cup \{i\}\right) - V\left(S_h\right)\right)
\end{equation*}

Notice that, in the $S_x$ encoding, $H_m$ represents each subset of $F\setminus\{i\}$ of size $m$.
As a result, the equation is, in effect, summing over each subset of $F\setminus\{i\}$.
As a final step, multiply by $W_\text{max}$:
\begin{equation*}
    \sum\limits_{S\subseteq F\setminus\{i\}} \gamma(\lvert F \setminus \{i\} \rvert, \lvert S \rvert) \cdot \left(V\left(S \cup \{i\}\right) - V\left(S\right)\right).
\end{equation*}
This expected value is precisely the Shapley value $\Phi_i$ to some error which is shown empirically in Section~\ref{sec:example}. 

With the ability to craft these states, we can now extract the required information to find a close approximation to the Shapley value.
Assuming we could get the expected value instantly, we would (empirically; see Section~\ref{sec:example}) have an error of $\mathcal{O}(a^{-1})$.
However, it takes some work to approximate an expected value.
This can be achieved with accuracy $\epsilon$ with some chosen confidence using amplitude amplification~\cite{montanaro2015quantum}, resulting in circuit depth $\mathcal{O}(\epsilon^{-1})$ times our algorithm.
Alternatively, for simplicity, we can repeatedly construct state $\ket{\psi_3}$ and measure the utility register roughly $\mathcal{O}(\epsilon^{-2})$ times.
The resulting measurements of $\ket{0}$s and $\ket{1}$s can be analyzed using binomial distributions.
We can estimate the underlying probability, from which we can extract the Shapley value, within some confidence interval \cite{wallis2013binomial}.
This estimation takes up the bulk of the runtime of the algorithm.
Thus the total error can be as low as $\mathcal{O}(W_{\max}\cdot(a^{-1}+\epsilon))$.
This can be even further modified to depend on the standard deviation of $V$, based on the techniques from \cite{montanaro2015quantum}.

\section{Numerical Examples}
\label{sec:example}

Perhaps we can help David solve his problem (cf. Section~\ref{sec:WeightedVotingGame}) using our quantum approach. 
We intend to apply the method presented in Section~\ref{sec:algorithm} for weighted voting games. Additional results, together with the simulation code, are available in a \href{https://github.com/iain-burge/QuantumShapleyValueAlgorithm}{companion github repository} \cite{githubEntry}.
Let us approximate each player's Shapley value, $\Phi_i$.
We have a game $G=(F,V)$, where $F=\{0,1,2\}$, $n=2$, and $V$ is defined in Equation~\ref{eq:voteGameV}.
In this case, $W(S)$, $S\subseteq F\setminus \{i\}$,
represents whether or not player $i$ has the deciding vote assuming those in $S$ are voting for Chinese food.
For example, Alice (0) is a deciding vote for $S=\{1\}$ ($S$ contains Bob).
Let the voting weights be $w_0=3$, $w_1=2$, and $w_2=1$.
Thus, we can define $W(S)$, $S\in F\setminus \{i\}$, for player $i$ to be,

\begin{equation*}
    W(S)=W'\left(\sum\limits_{j\in S} w_j\right)
\end{equation*}
where,
\begin{equation*}
    W'(Votes)= 
    \begin{cases}
        1 &\quad \text{if } q-w_i \leq Votes < q, \\
        0 &\quad \text{otherwise.}
    \end{cases}
\end{equation*}

Note that, $W(S)$ is either $0$ or $1$.
Thus, we can take $W_\text{max}=1$, so $\hat{W}(x) = W(S_x)$.
Thus we can define $U_W$ to be:
\begin{equation*}
    U_W\ket{x}\ket{0} = \ket{x} \otimes \left[ \left(1-\hat{W}(x)\right)\ket{0} + \hat{W}(x)\ket{1}\right]
\end{equation*}
The quantum circuit for the scenario is shown in Figure~\ref{fig:U_wOfVotingGame}.

\begin{figure}
    \centering
    \includegraphics[width=\columnwidth]{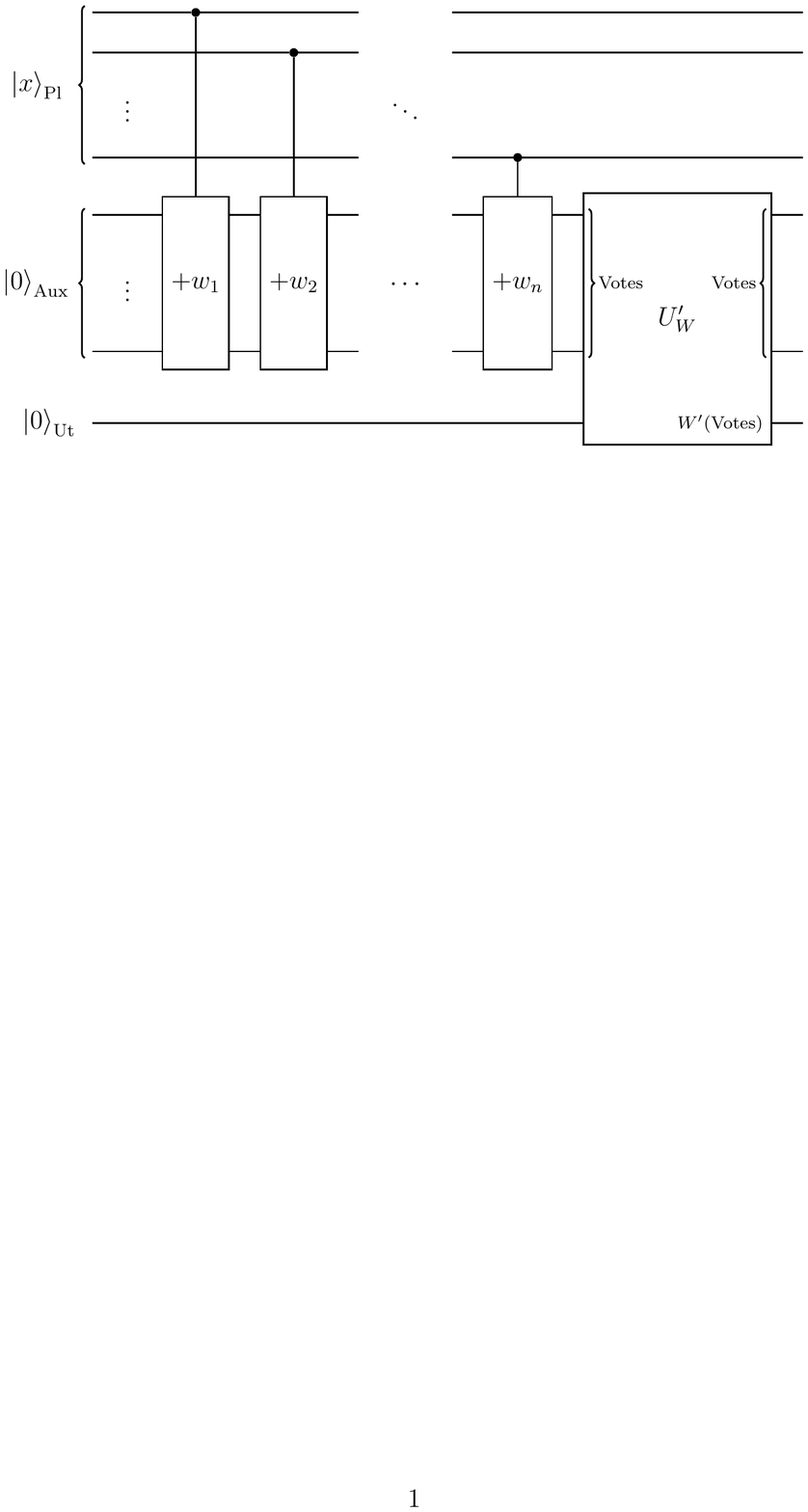}
    \caption{Circuit of $U_W$ for a weighted voting game.
    This circuit takes an input $x$ and outputs $W(S_x)$ in the utility register (Recall, $S_x$ is defined in Section~\ref{sec:algorithm}).
    The auxiliary register contains the total vote count.
    Just before $U_W'$, the Aux register is in a basis state corresponding to the vote count of $S_x$.
    $U_W'$ uses the auxiliary register as an input, and outputs whether the vote count is in the correct range for player $i$ to be a deciding vote. 
    Note that there is no $+w_i$ gate activation, as $S_x$ does not include player $i$ by definition.
    Results and simulation code are available in a \href{https://github.com/iain-burge/QuantumShapleyValueAlgorithm}{companion GitHub repository} \cite{githubEntry}.
    }
    \label{fig:U_wOfVotingGame}
\end{figure}

With $U_W$ defined, all other steps are entirely agnostic to voting games.
Let $\ell=2$, and suppose for simplicity's sake ${\epsilon=0}$.
This is not a realistic scenario, but it demonstrates how quickly the expected value of the utility register converges.
With these parameters, we get the following approximations for the Shapley values:
\begin{equation*}
    \tilde\Phi_0 \approx 0.6617,\quad \tilde\Phi_1, \tilde\Phi_2 \approx 0.1616 
\end{equation*}
The direct calculation for Alice can be seen in the Appendix.

To rigorously demonstrate efficacy, we performed many trials on random weighted voting games (Figure~\ref{fig:expectedValueError}).
Visual inspection confirms that, in most cases, when $\ell$ is incremented, the error is divided by more than a factor of two (so the reciprocal more than doubles).
As incrementing $\ell$ implies doubling the work to prepare $\ket{\psi_1}$, we have empirically shown a linear or better relationship between complexity and error for Step 1 (Section~\ref{sec:algorithm}).
So, the empirical error for this step is $\mathcal{O}(a^{-1})$.

Step 2 depends entirely on the game. However, if it is possible to implement on a classical computer, much like with Grover's algorithm, then it can be implemented in a quantum setting~\cite{grover1996fast}.

\begin{figure*}
    \centering
    \includegraphics[scale=0.7]{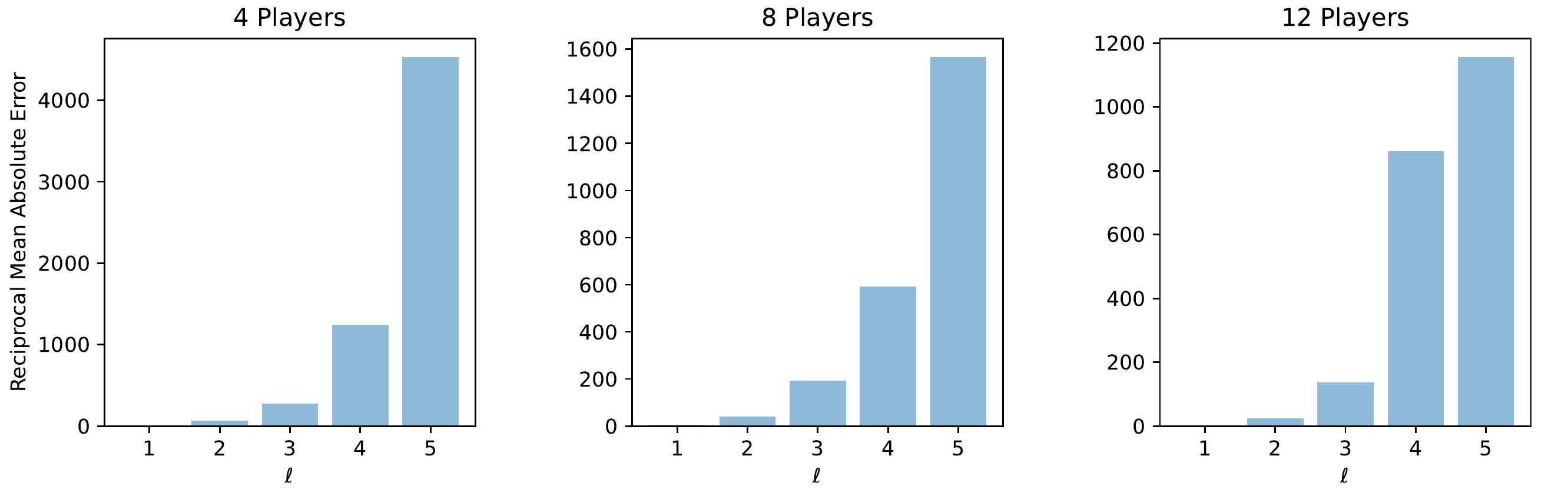}
    \caption{
        We generated 32 random weighted voting games for each condition.
        We generated random weights $w_j\in \mathbb{N}$ for each case such that $q\leq \sum_j w_j < 2q$.
        There were three primary scenarios: 
        (1) Four players, voting threshold $q=8$; 
        (2) Eight players, voting threshold $q=16$; and
        (3) 12 players, voting threshold $q=32$.
        We approximated every player's Shapley value for each scenario with our quantum algorithm using various $\ell$'s, assuming $\epsilon=0$.
        Next, we found the absolute error of our approximation by comparing each approximated Shapley value to its true value.
        Finally, we took the reciprocal of the mean for all Shapley value errors in each random game for each condition.
    }
    \label{fig:expectedValueError}
    
\end{figure*}

Step 3 is well studied, and there are two valid approaches to finding the expected value of measuring the utility register with error $\epsilon$.
Either (i), repeat steps one and two of Section~\ref{sec:algorithm} and measure the utility register $\mathcal{O}(\epsilon^{-2})$ times.
Otherwise, proceed with (ii), use amplitude estimation as described in Section~\ref{sec:algorithm} with the techniques described in \cite{montanaro2015quantum}.
Note that (ii) results in an increase to the circuit depth by $\mathcal{O}(\epsilon^{-1})$ times, but only needs to be repeated $\mathcal{O}(1)$ times assuming a fixed confidence probability.

Let us summarize based on our empirical findings and the assumption $W$ is more complex than Step 1. 
We have that complexity in terms of applications of $W$ is $\mathcal{O}(\epsilon^{-1})$ in the case of weighted voting games.
For more complex problems, where the value function $W$ is not restricted to $W:\mathcal{P}(F) \rightarrow [0,1]$ the approach can be slightly modified such that $W$ is applied $\mathcal{O}(\sigma\epsilon^{-1})$ times \cite{montanaro2015quantum}, where $\sigma$ is the standard deviation of the $W$ given the $\gamma$ distribution.

\section{Conclusion}
\label{sec:conclusion}

We have addressed one of the main problems of generating post hoc explanations of \gls*{ml} models in the quantum context. 
More precisely, we have addressed the problem of efficiently generating post hoc explanations using the concept of the Shapley value. 
Under the \gls*{xqml} context, the challenge of explainability is amplified since measuring a quantum system destroys the information. 
Using the classical concept of Shapley values for post hoc explanations in quantum computing does not translate trivially.  
We have proposed novel algorithms to reduce the problem of accurately estimating the Shapley values of a quantum algorithm into the solved problem of estimating the expected value of a quantum algorithm. 
Finally, we have determined the efficacy of the algorithms by using empirical and simulation methods.

\medskip

\noindent \textbf{Acknowledgments ---}
We thank Prof. Yuly Billig (Carleton University) for his notational suggestion regarding hamming distances.
We would also like to thank a peer of Iain, Michael Ripa, for many productive conversations regarding this research.
The research was supported by \gls*{nserc} of Canada.

\appendix

Quantum estimation of Alice's Shapley value by hand.
Let $\ell=2$.
Note that the Auxiliary register stores vote count.
We begin with the state:
\begin{equation*}
    \ket{00}_\text{Pt}\ket{00}_\text{Pl}\ket{000}_\text{Aux}\ket{0}_\text{Ut}
\end{equation*}

We perform the first step of the algorithm from Section \ref{sec:algorithm}, yielding:

\begin{footnotesize}
\begin{align*}
    \sum\limits_{k=0}^{3} &w_2(k) \ket{k}_\text{Pt} \Big[(1-t_2'(k))\ket{00}_\text{Pl}+\sqrt{t_2'(k)(1-t_2'(k))}\ket{01}_\text{Pl}\\
    &+\sqrt{t_2'(k)(1-t_2'(k))} \ket{10}_\text{Pl}+t_2'(k)\ket{11}_\text{Pl}\Big] \ket{000}_\text{Aux}\ket{0}_\text{Ut}
\end{align*}
\end{footnotesize}

Next, we tally the votes.
This step is the first half of the circuit in Figure \ref{fig:U_wOfVotingGame}, up to and without including $U_W'$.
We get,

\begin{footnotesize}
\begin{align*}
    \sum\limits_{k=0}^{3} &w_2(k) \ket{k}_\text{Pt} 
    \Big[
    (1-t_2'(k))\ket{00}_\text{Pl}\ket{000}_\text{Aux}\\&
    +\sqrt{t_2'(k)(1-t_2'(k))}\ket{01}_\text{Pl}\ket{001}_\text{Aux}\\&
    +\sqrt{t_2'(k)(1-t_2'(k))} \ket{10}_\text{Pl}\ket{010}_\text{Aux}\\&
    +t_2'(k)\ket{11}_\text{Pl}\ket{011}_\text{Aux}
    \Big] 
    \ket{0}_\text{Ut}
\end{align*}
\end{footnotesize}

Performing the remainder of $U_W$ gives,
\begin{footnotesize}
\begin{align*}
    \sum\limits_{k=0}^{3} &w_2(k) \ket{k}_\text{Pt} 
    \Big[
    (1-t_2'(k))\ket{00}_\text{Pl}\ket{000}_\text{Aux}\ket{0}_\text{Ut}\\&
    +\sqrt{t_2'(k)(1-t_2'(k))}\ket{01}_\text{Pl}\ket{001}_\text{Aux}\ket{1}_\text{Ut}\\&
    +\sqrt{t_2'(k)(1-t_2'(k))} \ket{10}_\text{Pl}\ket{010}_\text{Aux}\ket{1}_\text{Ut}\\&
    +t_2'(k)\ket{11}_\text{Pl}\ket{011}_\text{Aux}\ket{1}_\text{Ut}
    \Big] 
\end{align*}
\end{footnotesize}

The expected value when measuring the utility register is $\approx0.6617$, a close approximation for $\Phi_0 = 2/3 = 0.6666\cdots$.

\balance 

\end{document}